\newtheorem{Thm}{Theorem}
\numberwithin{equation}{section}
\newcommand{\mb}{\mathbf}
\newcommand{\bb}{\mathbb}
\newcommand{\ms}{\mathscr}
\newcommand{\mr}{\mathrm}
\begin{document}

\title{Lorentz Group and Oriented MICZ-Kepler Orbits}
\author{Guowu Meng}

\address{Department of Mathematics, Hong Kong Univ. of Sci. and
Tech., Clear Water Bay, Kowloon, Hong Kong}

\email{mameng@ust.hk}
\thanks{The author was supported by the Hong Hong Research Grants Council under RGC Project No. 603110 and the Hong Kong University of Science and Technology under DAG S09/10.SC02.}


\date{November 15, 2011}


\maketitle
\begin{abstract}
 The MICZ-Kepler orbits are the non-colliding orbits of the MICZ Kepler problems (the magnetized versions of the Kepler problem). The oriented MICZ-Kepler orbits can be parametrized by the canonical angular momentum $\mb L$ and the Lenz vector $\mb A$, with the parameter space consisting of the pairs of 3D vectors $(\mb A, \mb L)$ with ${\mb L}\cdot {\mb L} > (\mb L\cdot \mb A)^2$.  The recent 4D perspective of the Kepler problem yields a new parametrization, with the parameter space consisting of the pairs of Minkowski vectors $(a,l)$ with $l\cdot l =-1$, $a\cdot l =0$,  $a_0>0$. Here, $a_0$ is the temporal component of $a$. 
 
This new parametrization of orbits implies that the MICZ-Kepler orbits of different magnetic charges are related to each other by symmetries: \emph{${\mr {SO}}^+(1,3)\times {\bb R}_+$ acts transitively on both the set of oriented elliptic MICZ-Kepler orbits and the set of oriented parabolic MICZ-Kepler orbits}. This action extends to ${\mr {O}}^+(1,3)\times {\bb R}_+$, the \emph{structure group} for the rank-two Euclidean Jordan algebra whose
 underlying Lorentz space is the Minkowski space.   
 \end{abstract}


\section {Introduction}
The Kepler problem is a two-body dynamic problem with an attractive force obeying the inverse square law. Mathematically it can be reduced to the one-body dynamic problem with the Lagrangian
\begin{eqnarray}
\ms L= {1\over 2}{\mb r}'^2+{1\over r}\nonumber
\end{eqnarray} 
where ${\mb r}$ is a function of $t$ taking value in ${\bb R}^3_*:={\bb R}^3-\{{\mb 0}\}$, ${\mb r}'$ is the velocity vector and $r$ is the length of $\mb r$. We shall also refer to this later one-body dynamic problem as the Kepler problem.

\smallskip
The (classical) MICZ Kepler problem with magnetic charge $\mu\in {\bb R}$, a natural mathematical generalization of the Kepler problem,  is the one-body dynamic problem with the Lagrangian \cite{MICZ}
\begin{eqnarray}
\ms L= {1\over 2}{\mb r}'^2+{1\over r} -{\ms A}\cdot {\mb r}'-{\mu^2\over 2r^2}\nonumber
\end{eqnarray} 
where ${\ms A}$ is the magnetic potential such that ${\mb B}:=\nabla\times {\ms A}=\mu{{\mb r}\over r^3}$.  Then the equation of motion is
\begin{eqnarray}\label{EM}
{\mb r}'' = - {\mb r}'\times {\mb B}+\left({\mu^2\over r^4}-{1\over r^3}\right){\mb r}. 
\end{eqnarray}

For each MICZ Kepler problem, our main concern here is its \emph{orbits}. Here, an orbit  is just the trace of a solution to equation (\ref{EM}). Actually we are only interested in the \emph{MICZ-Kepler orbits}, i.e., the non-colliding orbits. More precisely, we are only interested in the oriented MICZ-Kepler orbits --- the non-colliding solutions to equation (\ref{EM}) module the time translation.

\smallskip
The main purpose of this note is to add the following new piece of knowledge about the MICZ-Kepler orbits:  \emph{${\mr {SO}}^+(1,3)\times {\bb R}_+$ acts transitively on both the set of oriented elliptic MICZ-Kepler orbits and the set of oriented parabolic MICZ-Kepler orbits}. Here ${\mr {SO}}^+(1,3)$ is the identity component of ${\mr {SO}}(1,3)$ and ${\bb R}_+$ be the multiplicative group of positive real numbers. This action actually extends to ${\mr {O}}^+(1,3)\times {\bb R}_+$. Here, $\mr O^+(1,3)$ is the maximal subgroup of $\mr O(1,3)$ which leaves the future light cone invariant. The significance of this new piece of knowledge is that, elliptic (or parabolic) MICZ Kepler orbits of various magnetic charges are related by symmetries. 

\vskip 10pt
In section \ref{1stP}, we present the old parametrization for the oriented MICZ-Kepler orbits. In the process, a new formulation for the oriented MICZ-Kepler orbits is introduced. This new formulation, originated from the recent Jordan algebra approach \cite{meng09} to the Kepler problem, naturally leads to a new parametrization for the oriented MICZ-Kepler orbits. The aforementioned new piece of knowledge about the MICZ-Kepler orbits, stated as Theorem \ref{main} in section  \ref{2ndP},  becomes self-evident from this new parametrization.  

\vskip 10pt
Before we get into the details in the next two sections, let us fix our notations and conventions for the 3D space and the Minkowski space.

By the 3D space we mean ${\bb R}^3$ together with the usual dot product. By convention, an element in this space  is written by a Latin letter in bold. For ${\mb a}\in {\bb R}^3$, we
write ${\mb a} = (a_1, a_2, a_3)$ if we wish to write out its components explicitly. In the explicit form, the dot product of $\mb a $ with $\mb b$ is
$$
{\mb a}\cdot {\mb b}=a_1b_1+a_2b_2+a_3b_3.
$$
As usual,  we shall write the length of $\mb a$ as $a$. With this in mind, the old parametrization space for the oriented MICZ-Kepler orbits is
$$
{\ms M}=\{(\mb A, \mb L)\in {\bb R}^3\times {\bb R}^3\mid L^2> (\mb L\cdot \mb A)^2\}.
$$
The Minkowski space is $({\bb R}^4, \cdot)$ where $\cdot$ is the Lorentz dot product.    By convention, an element in this space is written by a small Latin letter, and we
write $a = (a_0, a_1, a_2, a_3)$ if we wish to write out the components of $a$ explicitly. We also write $a=(a_0, \mb a)$.  In the explicit form, the Lorentz dot product of $a$ with $b$ is
$$
a\cdot b=a_0b_0-{\mb a}\cdot {\mb b}.
$$
As usual, we shall write $a\cdot a$ as $a^2$ for simplicity. Here is a word of warning: $a$ either denote a $4$-vector or the length of the $3$-vector $\mb a$. One should have no problem to determine which is which according to the context. 
 With this in mind, the new parametrization space for the oriented MICZ-Kepler orbits is
$$
{\mathcal M}=\{(a, l)\in {\bb R}^4\times {\bb R}^4\mid a\cdot l =0, l^2 = -1, a_0 >0\}.
$$

The two parametrization spaces are smoothly equivalent:
\begin{eqnarray}
{\mathcal M}& \to & {\ms M}\cr
(a, l) &\mapsto & ({\mb a\over a_0}, {\mb l\over \sqrt{a_0}}).\nonumber
\end{eqnarray}
While the action of ${\mr {O}}^+(1,3)\times {\bb R}_+$ on $(a, l)$ is linear and apparent:
$$(\Lambda, \lambda)\cdot (a, l) = (\lambda\cdot \Lambda a, \Lambda l) ,$$
the corresponding  action on $({\mb A}, {\mb L})$ is nonlinear, that is probably why this action has not been discovered up until now. Since the positive sign of $a_0$ is kept under the group action if and only if either $a^2 >0$ (elliptic orbits) or $a^2 =0$ (parabolic orbits),  there is no similar action on the oriented hyperbolic orbits $(a^2 < 0$).

\section{The old parametrization of orbits}\label{1stP}
The orbits are very easy to find. To start,  we observe that
\begin{eqnarray}
{\mb L} = {\mb r}\times {\mb r}'+\mu{{\mb r}\over r}, \quad {\mb A} = {\mb L}\times {\mb r}'+{{\mb r}\over r}
\end{eqnarray}
are constants of motion and 
\begin{eqnarray}
{\mb L}\cdot {\mb A} = \mu.
\end{eqnarray}
Here, $\mb L$ is called the (canonical) angular momentum and $\mb A$ is called the Lenz vector. To get an orbit, all we need to do is to take the dot product of $\mb r$ with $\mb L$ and $\mb A$ to produce equations
\begin{eqnarray}\label{orbit}
r-{\mb A}\cdot {\mb r} = L^2-\mu^2, \quad {\mb L}\cdot {\mb r} = \mu r.
\end{eqnarray}
Here $L$ is the length of $\mb L$. Note that, $L^2-\mu^2= |{\mb r}\times {\mb r}'|^2\ge 0$ and the equality holds if and only if the orbit is a colliding orbit. Since we are only interested in the non-colliding orbits, from now on we always assume that $L^2> \mu^2$. In this case, since $\mb r'$ vanishes at nowhere, $\mb r(t)$ is defined for $t\in \bb R$ and the orbit must be the entire curve defined in Eq. (\ref{orbit}).

Since a solution to equation (\ref{EM}) depends on six independent constants: the initial position and initial velocity, an orbit depends on five independent constants. Therefore, one expects that an orbit in a MICZ-Kepler problem with magnetic charge $\mu$ is uniquely determined by the five independent constants ${\mb L}$ and ${\mb A}$. Indeed, that is true  provided that we stick to oriented MICZ-Kepler orbits, a major fact we shall establish in the theorem below.

\smallskip
To state this theorem, we let ${\ms O}$ be the set of oriented MICZ-Kepler orbits, $\ms M$ be the subspace of $\bb R^6$ consisting of pairs $(\mb A, \mb L)$ of vectors in ${\bb R}^3$ such that $L^2 > ({\mb L}\cdot {\mb A})^2$, i.e,
\begin{eqnarray}
{\ms M} =\left\{(\mb A, \mb L)\in {\bb R}^3\times {\bb R}^3 \mid L^2 > ({\mb L}\cdot {\mb A})^2\right\}.
\end{eqnarray}
Note that, for $(\mb A, \mb L)\in \ms M$, $\mb L \neq (\mb L\cdot \mb A) \mb A$, otherwise, ${\mb L}\cdot {\mb L}=(\mb L\cdot \mb A)^2$, a contradiction.

The knowledge about the MICZ Kepler orbits up until now can be summarized in the following theorem, cf. Ref. \cite{MICZ}. 
\begin{Thm}\label{oldfacts} The following statements are true.

(1) For each  $(\mb A, \mb L)\in \ms M$,  the curve defined in Eq. (\ref{orbit}) with $\mu ={\mb L}\cdot {\mb A}$ is a conic with eccentricity
\begin{eqnarray}
e = {|{\mb L}\times {\mb A}|\over |{\mb L}-\mu {\mb A} |}.
\end{eqnarray}
Moreover, 
\begin{eqnarray}
1-e^2={L^2-\mu^2\over |{\mb L}-\mu {\mb A}|^2}(1-A^2),
\end{eqnarray}
and the conic is oriented by ${\mb L}-\mu {\mb A}$ in the sense that ${\mb L}-\mu {\mb A}$ is a positive multiple of the binormal vector of the oriented curve.

(2) The assignment of an oriented conic to each $(\mb A, \mb L)\in \ms M$ in part (1) is one-to-one. So an oriented MICZ-Kepler orbit is uniquely determined by its angular momentum and Lenz vector.

(3) Every conic in part (1) is a MICZ-Kepler orbit. So there is an one-to-one correspondence 
\begin{eqnarray}
\phi: \quad {\ms M}\to {\ms O}
\end{eqnarray}
via equation (\ref{orbit}).

(4) For the oriented MICZ-Kepler orbit  with angular momentum $\mb L$  and Lenz vector $\mb A$, we have the energy formula 
\begin{eqnarray}
E=-{1- A^2\over 2(L^2 -\mu^2)}
\end{eqnarray} where $\mu =\mb L\cdot \mb A$.

(5) The MICZ-Kepler orbit is respectively an ellipse, parabola and hyperbola if $E$ is  respectively negative, zero and positive.

\end{Thm}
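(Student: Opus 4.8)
The plan is to treat the five parts in the order (4), (1), (3), (2), (5), since the energy formula and the eccentricity identity are the fuel for everything else. Throughout I would work with the two conserved vectors $\mb L=\mb r\times\mb r'+\mu\hat{\mb r}$ and $\mb A=\mb L\times\mb r'+\hat{\mb r}$ (where $\hat{\mb r}=\mb r/r$) together with the two scalar relations in (\ref{orbit}), which arise from dotting these vectors with $\mb r$. Three elementary facts carry most of the load: (i) $\mb L\cdot\hat{\mb r}=\mu$, hence $|\mb r\times\mb r'|^2=|\mb L-\mu\hat{\mb r}|^2=L^2-\mu^2$ is constant; (ii) multiplying the first equation of (\ref{orbit}) by $\mu$ and using the second gives $(\mb L-\mu\mb A)\cdot\mb r=\mu(L^2-\mu^2)$, so the orbit lies in an affine plane $P$ with normal $\mb L-\mu\mb A$ (nonzero, by the remark preceding the theorem); and (iii) consequently the orbit is exactly $\{\mb r\in P:\ r=\mb A\cdot\mb r+(L^2-\mu^2)\}$.

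For part (4) the plan is a direct computation of $A^2=|\mb L\times\mb r'+\hat{\mb r}|^2$: expand, insert $\mb L\cdot\mb r'=\mu r'$ and $(\mb L\times\mb r')\cdot\hat{\mb r}=-(L^2-\mu^2)/r$, split $|\mb r'|^2$ into its radial part $r'^2$ and its transverse part $(L^2-\mu^2)/r^2$ (by (i)), and recognize $|\mb r'|^2-2/r+\mu^2/r^2=2E$ from the Hamiltonian $E=\frac12|\mb r'|^2-1/r+\mu^2/(2r^2)$; this collapses to $A^2=2E(L^2-\mu^2)+1$, which is the stated formula. For part (1): by (iii) the orbit is the curve cut on $P$ by $r=\mb A\cdot\mb r+(L^2-\mu^2)$; writing $\mb A=\mb a'+(\mb A\cdot\hat{\mb n})\hat{\mb n}$ with $\hat{\mb n}=(\mb L-\mu\mb A)/|\mb L-\mu\mb A|$ and $\mb a'\in P$, and using coordinates in $P$ centered at the foot $\mb r_c$ of the perpendicular from the origin (so $\mb r=\mb r_c+\mb r_\perp$, $|\mb r|^2=d^2+\rho^2$ with $d=|\mb r_c|$ and $\rho=|\mb r_\perp|$), the equation becomes $\sqrt{\rho^2+d^2}=\mb a'\cdot\mb r_\perp+c'$ for a positive constant $c'$; squaring exhibits a conic of eccentricity $|\mb a'|$, and $|\mb a'|=|\mb A\times(\mb L-\mu\mb A)|/|\mb L-\mu\mb A|=|\mb L\times\mb A|/|\mb L-\mu\mb A|$ is one line of vector algebra. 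The $1-e^2$ identity is then pure algebra with $\mb L\cdot\mb A=\mu$: write out $|\mb L-\mu\mb A|^2=L^2-2\mu^2+\mu^2A^2$ and $|\mb L\times\mb A|^2=L^2A^2-\mu^2$ and subtract. For the orientation claim I would feed the equation of motion (\ref{EM}): with $\mb B=\mu\mb r/r^3$ and $\mb r\times\mb r'=\mb L-\mu\hat{\mb r}$ it reduces to $\mb r''=(\mu/r^3)(\mb r\times\mb r')+(\mu^2/r^4-1/r^3)\mb r$; forming $\mb r'\times\mb r''$ and dotting with $\mb L-\mu\mb A$, then simplifying with (\ref{orbit}), with $\mb r'\cdot(\mb L-\mu\mb A)=0$, and with part (4), everything collapses to $|\mb L-\mu\mb A|^2/r^3>0$; since $\mb r'\times\mb r''$ is normal to $P$ (differentiate $\hat{\mb n}\cdot\mb r=d$ twice), it is a positive multiple of $\mb L-\mu\mb A$, i.e.\ of the binormal.

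For part (3), fix $(\mb A,\mb L)\in\ms M$, put $\mu=\mb L\cdot\mb A$, pick a point $\mb r_0$ on the curve, and solve the linear system $\mb r_0\times\mb v+\mu\hat{\mb r}_0=\mb L$, $\mb L\times\mb v+\hat{\mb r}_0=\mb A$ for $\mb v$: the second equation fixes the part of $\mb v$ orthogonal to $\mb L$ (its right side is orthogonal to $\mb L$ precisely because $\mb r_0$ lies on the curve), and substituting into the first fixes the part along $\mb L$, the needed consistency being two ``dot with $\mb r_0$'' / ``dot with $\mb L$'' identities that vanish by (\ref{orbit}). The solution of (\ref{EM}) through $(\mb r_0,\mb v)$ then has conserved vectors $\mb L,\mb A$, is non-colliding (since $L^2>\mu^2$ and the curve never meets the origin), and hence traces the entire curve, so $\phi:\ms M\to\ms O$ is onto. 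Part (2) is this construction read backwards: from the oriented conic one reads off the co-oriented normal $\hat{\mb n}$ of its plane, the signed distance $d$ from the origin, the eccentricity $e$, the axis direction, and the constant $c'$, and then the relations among $\mu$, $L^2$, $|\mb L-\mu\mb A|$, $\mb A\cdot\hat{\mb n}$ --- crucially including the constraint $\mu=\mb L\cdot\mb A$ --- simplify (the key collapse being an identity of the shape $(L^2-\mu^2)^2=c'\,|\mb L-\mu\mb A|^2$) to give first $\mu=d/\sqrt{c'}$, then $L^2$, then $\mb A$ and $\mb L$ uniquely; this is injectivity. Finally part (5) is immediate from parts (1) and (4): since $L^2>\mu^2$, the sign of $1-e^2$ equals that of $1-A^2$, which equals that of $-E$, so $E<0$, $E=0$, $E>0$ correspond to $e<1$, $e=1$, $e>1$, i.e.\ ellipse, parabola, hyperbola.

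The step I expect to be the main obstacle is part (2): unlike the others it is an inversion rather than a forward computation, and one must verify that the data visible on the oriented conic --- its plane, its orientation, its eccentricity, and the position of the force center --- together with the single constraint $\mu=\mb L\cdot\mb A$ genuinely recover all of $(\mb A,\mb L)$; in particular the component of $\mb A$ normal to the orbit plane is invisible to the ``shape'' data and is pinned down only through that constraint. The orientation computation in part (1) is the other delicate point, since it is the one place where the equation of motion itself (not merely the conserved quantities) must be used, and the vector algebra has to be pushed all the way to the manifestly positive $|\mb L-\mu\mb A|^2/r^3$.
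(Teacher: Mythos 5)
Your proposal is correct, and it diverges from the paper's own proof in a few instructive places. Parts (4) and (5) and the eccentricity/$1-e^2$ computation in part (1) essentially coincide with the paper's: the paper likewise passes to the plane $(\mb L-\mu\mb A)\cdot \mb r=\mu(L^2-\mu^2)$ and reads the eccentricity off the component of $\mb A$ (equivalently of $\mb L/\mu$) parallel to that plane. Your treatment of the orientation claim in part (1) is a genuine addition --- the paper asserts it without proof --- and your computation does check out: expanding $\mb r'\times\mb r''$ via the equation of motion and using $(\mb L-\mu\mb A)\cdot\mb r'=0$, $(\mb L-\mu\mb A)\cdot\mb r=\mu(L^2-\mu^2)$ and $(\mb L-\mu\mb A)\cdot(\mb r\times\mb r')=(L^2-\mu^2)(1-\mu^2/r)$ yields $(L^2-\mu^2)(1+2\mu^2E)/r^3=|\mb L-\mu\mb A|^2/r^3>0$, exactly as you predicted. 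The real divergence is part (2): the paper lifts the oriented conic to the future light cone in Minkowski space and reduces injectivity to the statement that the oriented $2$-plane $\{a\cdot x=1,\ l\cdot x=0\}$ determines the normalized pair $(a,l)$ --- a short linear-algebra argument whose payoff is that it sets up the new parametrization of Section 3. Your direct Euclidean inversion also works, and your key identity is correct: the constant in the in-plane focus--directrix equation is $c'=(L^2-\mu^2)^2/|\mb L-\mu\mb A|^2$, whence $\mu=d/\sqrt{c'}$; then $1-A^2=(1-e^2)\,|\mb L-\mu\mb A|/\sqrt{c'}$ combined with $|\mb L-\mu\mb A|^2=(L^2-\mu^2)-\mu^2(1-A^2)$ gives $|\mb L-\mu\mb A|=\sqrt{c'}-\mu^2(1-e^2)/\sqrt{c'}$, after which $L^2$, $\mb A$ and $\mb L$ follow; but this is messier and yields nothing reusable. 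For part (3) the paper writes down closed-form initial data $(\mb q,\mb v)$ built from a well-chosen unit vector $\mb n$ in the span of $\mb L,\mb A$, whereas you solve the linear system for $\mb v$ at an arbitrary point $\mb r_0$ of the conic; both succeed, though your version should record that $\mb r_0\times\mb L\neq\mb 0$ (if $\mb r_0\parallel\mb L$ then $\mb L\cdot\mb r_0=\mu r_0$ forces $L^2=\mu^2$) and that nonemptiness of the conic is supplied by part (1).

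One shared looseness, not a defect of yours relative to the paper: in the hyperbolic case Eq.\ (\ref{orbit}) defines a single branch, and the squaring step in your part (1) (like the paper's focus--directrix reading) silently discards the constraint $\mb A\cdot\mb r+L^2-\mu^2=r>0$; this should be kept in mind when matching the point set of the orbit with the full quadric.
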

The correspondence $\phi$ in part (3) gives the {\bf old parametrization} for the oriented MICZ-Kepler orbits from which we know that the moduli space of all oriented MICZ-Kepler orbits naturally becomes a smooth $6$-manifold. Although this theorem is essentially known to the authors in Ref. \cite{MICZ}, for completeness, we give a proof here.

\begin{proof}
Throughout this proof we shall work under the assumption that $\mu\neq 0$ . The result for $\mu =0$ is then obtained by taking the limit $\mu \to 0$. With this in mind, the curve described in Eq. (\ref{orbit}) can be recast as 
\begin{eqnarray}\label{orbit'}
({\mb L} -\mu {\mb A})\cdot {\mb r} = \mu(L^2-\mu^2), \quad {\mb L}\cdot {\mb r} = \mu r.
\end{eqnarray}
Let $\tilde{\mb L} = {\mb L} -\mu {\mb A}$ and write ${{\mb L}\over \mu}= {{\mb L}\cdot \tilde{\mb L}\over \mu {\tilde L}^2}\tilde{\mb L}+\tilde {\mb A}$. Note that $\tilde{\mb L} \neq \mb 0$, otherwise, $0 = \tilde{\mb L}\cdot {\mb L}=L^2-\mu^2$, a contradiction. Therefore, the curve can be described by equations
\begin{eqnarray}
\tilde{\mb L}\cdot {\mb r} =  \mu(L^2-\mu^2), \quad  r-\tilde{\mb A}\cdot {\mb r} = {(L^2-\mu^2)^2\over \tilde L^2},
\end{eqnarray}
supplemented by the condition $\tilde{\mb L}\cdot \tilde{\mb A} =0$. It is clear that the curve is a conic with eccentricity
$$e=| \tilde{\mb A}|=\sqrt{{L^2\over \mu^2}-{({\mb L}\cdot \tilde{\mb L})^2\over \mu^2 {\tilde L}^2}}=\sqrt{L^2{\tilde L}^2-({\mb L}\cdot \tilde{\mb L})^2\over \mu^2{\tilde L}^2}={|{\mb L}\times \tilde{\mb L}| \over |\mu| {\tilde L}}={|{\mb L}\times {\mb A}| \over | {\mb L} -\mu {\mb A}|}.$$
So $$1-e^2=1- {L^2\over \mu^2}+{({\mb L}\cdot \tilde{\mb L})^2\over \mu^2 {\tilde L}^2}= {L^2-\mu^2\over \mu^2{\tilde L}^2}(L^2-\mu^2-{\tilde L}^2)= {L^2-\mu^2\over {\tilde L}^2}(1-A^2).$$
This is part (1). 

\smallskip
To prove part (2), we make a small digression first. Observe that the projection of the Minkowski space onto ${\bb R}^3$ defines a diffeomorphism from the future light cone
$\{(x_0, {\mb r}) \mid x_0^2 - r^2 =0, x_0>0\}$ onto ${\bb R}^3_*$. Therefore, each oriented curve in ${\bb R}^3_*$ can be lifted to a unique oriented curve on the future light cone. In fact, the oriented curve defined by Eq. (\ref{orbit}) is lifted to the intersection of the future light cone with the oriented plane
defined by the following equations
\begin{eqnarray}\label{newF}
a\cdot x =1, \quad l \cdot x=0.
\end{eqnarray}
Here,  $x = (x_0, {\mb r})$,
\begin{eqnarray}
a={1\over L^2-\mu^2}(1, {\mb A}), \quad l={1\over \sqrt{L^2-\mu^2}}(\mu, {\bf L}),
\end{eqnarray}
and $\cdot$ is the Lorentz product, and the orientation of the plane is represented \footnote{Up to a positive multiple, $*_4(a\wedge l)=*_3({\mb L}-\mu {\mb A})+e_0\wedge *_3({\mb A}\wedge {\mb L}) $, so, once pulled back to the orbit plane on ${\bb R}^3$, it becomes $*_3({\mb L}-\mu {\mb A})$  which represents the orientation of the orbit plane on ${\bb R}^3$. Here, $*_4$ ($*_3$ reps.) is the Hodge star operator of ${\bb R}^4$ ($\bb R^3$ resp.). }  by $a\wedge l$.  Note that $a\cdot l =0$, $l\cdot l = -1$ and $a_0>0$. With this digression in mind, all we need to do is to show that the assignment of the oriented plane defined by Eq. (\ref{newF}) is one-to-one. But this can be verified as follows. 

Suppose that $(\tilde {\mb A}, \tilde {\mb L})$ and $( {\mb A}, {\mb L})$ define the same oriented plane, then
$a\wedge l$ and $\tilde a\wedge \tilde l$ must be a positive multiple of each other. So $\tilde a = x_1a+y_1l$ and $\tilde l = x_2a+y_2l$ for some real numbers $x_1$, $y_1$, $x_2$ and $y_2$. Then
$1= \tilde a \cdot x = x_1a\cdot x+y_1 l\cdot x= x_1$ and $0 =\tilde l \cdot x = x_2 a\cdot x+y_2 l\cdot x = x_2$. So $\tilde l = y_2l$ and $\tilde a = a+y_1l$. Then
$-1=\tilde l^2 = y_2^2 l^2 = -y_2^2$ and $0=\tilde l\cdot \tilde a =-y_1y_2$, so $y_1=0$ and $y_2=1$ or $-1$, i.e., $\tilde a =a$ and $\tilde l =l$ or $-l$. Since $a\wedge l$ and $\tilde a\wedge \tilde l$ must be a positive multiple of each other, we have $\tilde l = l$. This finishes the proof of part (2) because $(a, l)$ and $(\mb A, \mb L)$ determine each other.

\smallskip
To prove part (3), for each $({\mb A}, {\mb L})\in {\ms M}$, we need to specify initial data $\mb q$, $\mb v$ such that
\begin{eqnarray}
{\mb L} = {\mb q}\times {\mb v}+\mu{{\mb q}\over q}, \quad {\mb A} = {\mb L}\times {\mb v}+{{\mb q}\over q}
\end{eqnarray}
where $\mu ={\mb L}\cdot {\mb A} $. To find such initial data, we fix a 2D vector subspace of $\bb R^3$ that contains both $\mb L$ and $\mb A$, then we choose a unit vector $\mb n$ on this subspace such that
$\mu = {\mb L}\cdot {\mb n}$ and $|{\mb A}-{\mb n}|$ is largest possible. Such an $\mb n$ is unique unless ${\mb L}\times {\mb A}=0$. In any case, ${\mb A} \neq {\mb n}$. Now if we
choose
\begin{eqnarray}
{\mb v}={1\over L^2}({\mb A}-{\mb n})\times {\mb L},
\end{eqnarray}
then $\mb v\neq \mb 0$ and is orthogonal to the subspace.  Next we chose 
 $$
 {\mb q} ={{\mb v}\times ({\mb L}-\mu{\mb n})\over v^2}.
 $$
 Note that ${\mb q}\neq \mb 0$, otherwise, $L^2=\mu^2$, contradiction. One can easily see that ${\mb n}\times {\mb q}=0$, so ${\mb q} =\pm q\mb n$, in fact, ${\mb q}=q{\mb n}$. To see that, we note that
 $v^2 {\mb n}\cdot {\mb q}={\mb v}\cdot ({\mb L}\times {\mb n})$, so
 \begin{eqnarray}
 v^2L^2 {\mb n}\cdot {\mb q}&= &(({\mb A}-{\mb n})\times {\mb L})\cdot ({\mb L}\times {\mb n}) = (({\mb A}-{\mb n})\times {\mb L})\times {\mb L})\cdot {\mb n}\cr
 &=& L^2(1-{\mb A}\cdot{\mb n})
 \end{eqnarray}
or
\begin{eqnarray}
 v^2 {\mb n}\cdot {\mb q} &=& 1-{\mb A}\cdot{\mb n}= 1- {\mu^2\over L^2} +\sqrt{1-{\mu^2\over L^2}}\sqrt{A^2-{\mu^2\over L^2}} >0.
 \end{eqnarray}
 One can verify easily that this choice of initial data works:
\begin{eqnarray}
{\mb L}\times {\mb v}+{\mb q\over q} ={1\over L^2}{\mb L}\times((\mb A-\mb n)\times \mb L)+\mb n = (\mb A-\mb n)+\mb n= \mb A.\nonumber
\end{eqnarray}
\begin{eqnarray}
{\mb q}\times {\mb v}+\mu{\mb q\over q} ={1\over v^2}({\mb v}\times(\mb L-\mu\mb n))\times \mb v+\mu\mb n = (\mb L-\mu\mb n)+\mu\mb n= \mb L.\nonumber
\end{eqnarray}

 \smallskip
To prove part (4), we first note that $E={1\over 2}{\mb r'}^2-{1\over r}+{\mu^2\over 2r^2}$.  Next,
\begin{eqnarray}
 A^2 & = & ({\mb L}\times {\mb r}'+{\mb r\over r})^2 = L^2{\mb r'}^2-({\mb L} \cdot {\mb r}')^2+1+{2\over r}{\mb L}\cdot ({\mb r}'\times {\mb r})\cr
 &=& L^2{\mb r'}^2-\mu^2r'^2+1+{2\over r}{\mb L}\cdot (-{\mb L}+\mu {\mb r\over r})\cr
  &=& L^2({\mb r'}^2-{2\over r})+\mu^2({({\mb r}\times {\mb r}')^2\over r^2}-{\mb r}'^2)+1+{2\mu^2\over r}\cr
  &=& (L^2-\mu^2)({\mb r'}^2-{2\over r})+\mu^2{({\mb L}-\mu{\mb r\over r})^2\over r^2}+1\cr
  &=& (L^2-\mu^2)({\mb r'}^2-{2\over r})+\mu^2{L^2-\mu^2\over r^2}+1\cr
   &=&2 (L^2-\mu^2)E+1.\nonumber
 \end{eqnarray}
So
$$
E =- {1-A^2\over 2(L^2-\mu^2)}.
$$

\smallskip
Part (5) is a direct consequence of part (1) and part (4).
\end{proof}
From part (1) of this theorem, it is easy to see that the orbit is a circle if and only if $\mb L \times \mb A =0$.

\section{The new parametrization of orbits}\label{2ndP}
We wish to give a description for the set of the oriented MICZ orbits from the Minkowski space perspective. To do that, we let 
\begin{eqnarray}
{\mathcal M} = \left\{ (a, l) \in {\bb R}^4\times {\bb R}^4\mid l^2 = -1, a\cdot l =0, a_0>0\right\}.
\end{eqnarray}
One can see that the map 
\begin{eqnarray}
\begin{matrix}
{\mathcal M} &\longrightarrow & {\ms M} \cr
(a, l) &\mapsto & ({1\over a_0} {\mb a}, {1\over\sqrt{ a_0}} {\mb l}).
\end{matrix}
\end{eqnarray}
is an one-to-one correspondence. In fact, the inverse map sends $(\mb A, \mb L)$ to 
$$\left({1\over L^2-({\bf L}\cdot {\bf A})^2}(1, {\mb A}), {1\over \sqrt{L^2-({\bf L}\cdot {\bf A})^2}}({\bf L}\cdot {\bf A}, {\bf L})\right).$$
In view of part (3) of Theorem \ref{oldfacts}, we have an one-to-one correspondence
\begin{eqnarray}
\varphi:\quad \mathcal M\to \ms O.
\end{eqnarray}
And this is the {\bf new parametrization} for the oriented MICZ-Kepler orbits. Note that, the investigation in the previous section says that, for $(a, l)\in \mathcal M$, $\varphi (a, l)$
is the intersection of the future light cone with the plane
\begin{eqnarray}
a\circ x =1, \quad l \circ x=0
\end{eqnarray}
oriented by $a\wedge l$. 

\begin{Thm}\label{main} Let $\ms O_+$ ($\ms O_0$ reps.) be the the set of oriented elliptic (parabolic reps.) MICZ-Kepler orbits, ${\mathcal M}_+ =\{(a, l)\in {\mathcal M}\mid a^2 > 0\}$,  and ${\mathcal M}_0 =\{(a, l)\in {\mathcal M}\mid a^2 = 0\}$.

(1) For the oriented MICZ-Kepler orbit  parametrized by $(a, l)\in \mathcal M$, we have the energy formula 
\begin{eqnarray}
E=-{a^2\over 2a_0}.
\end{eqnarray}

(2) $\varphi(\mathcal M_+)={\ms O}_+$, $\varphi(\mathcal M_0)={\ms O}_0$.

(3) The action of ${\mr {SO}}^+(1,3)\times {\bb R}_+$ on $\mathcal M_0$ defined by $(\Lambda, \lambda)\cdot (a, l) = (\lambda\cdot (\Lambda a), \Lambda l)$ is transitive. So ${\mr {SO}}^+(1,3)\times {\bb R}_+$ acts transitively on the set of oriented parabolic MICZ-Kepler orbits.

(4) The action of ${\mr {SO}}^+(1,3)\times {\bb R}_+$ on $\mathcal M_+$ defined by $(\Lambda, \lambda) \cdot (a, l) = (\lambda\cdot \Lambda a, \Lambda l)$ is transitive. So ${\mr {SO}}^+(1,3)\times {\bb R}_+$ acts transitively on the set of oriented elliptic MICZ-Kepler orbits.  

(5) The action in parts (3) and (4) extends to ${\mr {O}}^+(1,3)\times {\bb R}_+$.

\end{Thm}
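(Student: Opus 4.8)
The plan is to obtain parts (1)--(2) by pulling Theorem \ref{oldfacts} back along the correspondence of the previous section, to get the transitivity statements (3)--(4) by explicitly moving an arbitrary pair to a fixed standard pair, and to get (5) from the invariance of the defining relations. For part (1) I would first record what the constraints say: writing $a=(a_0,\mb a)$, $l=(l_0,\mb l)$, the relations $l^2=-1$ and $a\cdot l=0$ read $\mb l^2-l_0^2=1$ and $\mb a\cdot\mb l=a_0l_0$, and combined with $\mb A=\mb a/a_0$, $\mb L=\mb l/\sqrt{a_0}$, $\mu=\mb L\cdot\mb A$ they give $\mu=l_0/\sqrt{a_0}$, hence $L^2-\mu^2=(\mb l^2-l_0^2)/a_0=1/a_0$ and $1-A^2=(a_0^2-\mb a^2)/a_0^2=a^2/a_0^2$. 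Substituting these into the energy formula of Theorem \ref{oldfacts}(4) yields $E=-a^2/(2a_0)$, which is part (1). Since $a_0>0$, the sign of $E$ is opposite to that of $a^2$, so Theorem \ref{oldfacts}(5) shows $\varphi(a,l)$ is elliptic exactly when $a^2>0$ and parabolic exactly when $a^2=0$; as $\varphi$ is a bijection this gives $\varphi(\mathcal M_+)=\ms O_+$ and $\varphi(\mathcal M_0)=\ms O_0$, i.e.\ part (2).

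For part (3), fix $o_0=\bigl((1,0,0,1),(0,1,0,0)\bigr)\in\mathcal M_0$. Given $(a,l)\in\mathcal M_0$, the vector $a$ lies on the future light cone, on which $\mr{SO}^+(1,3)$ acts transitively (a rotation aligns $\mb a$ with the $z$-axis and a boost along it rescales), so there is $\Lambda_1\in\mr{SO}^+(1,3)$ with $\Lambda_1a=(1,0,0,1)$; then $(\Lambda_1,1)$ moves $(a,l)$ to $\bigl((1,0,0,1),l_1\bigr)$ with $l_1=\Lambda_1l$ still obeying $l_1^2=-1$, $l_1\cdot(1,0,0,1)=0$. The set $\{l:l^2=-1,\ l\cdot(1,0,0,1)=0\}$ is the cylinder $\{(t,\cos\theta,\sin\theta,t):t\in\bb R\}$, and the stabilizer of $(1,0,0,1)$ in $\mr{SO}^+(1,3)$ is the little group $\mr{ISO}(2)$ of $z$-rotations together with the null rotations; a null rotation shifts $t$ by an arbitrary real while a $z$-rotation rotates $(\cos\theta,\sin\theta)$, so a suitable $\Lambda_2\in\mr{ISO}(2)$ carries $l_1$ to $(0,1,0,0)$ and fixes $(1,0,0,1)$. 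Hence $(\Lambda_2\Lambda_1,1)$ sends $(a,l)$ to $o_0$, $\mathcal M_0$ is a single orbit, and transitivity on $\ms O_0$ follows from part (2).

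Part (4) runs the same way with $o_+=\bigl((1,0,0,0),(0,1,0,0)\bigr)\in\mathcal M_+$, except that here the $\bb R_+$ factor is essential: $\mr{SO}^+(1,3)$ preserves $a^2$ and so cannot by itself join pairs with different $a^2>0$, but it acts transitively on each mass shell $\{b:b^2=\rho^2,\ b_0>0\}$, and composing such a $\Lambda_1$ with the rescaling $\lambda_1=1/\sqrt{a^2}$ brings $a$ to $(1,0,0,0)$; the stabilizer of $(1,0,0,0)$ is $\mr{SO}(3)$, transitive on the unit $2$-sphere, which is exactly the set of admissible $l=(0,\mb l)$ with $|\mb l|=1$. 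So $\mathcal M_+$ is a single orbit and transitivity on $\ms O_+$ follows.

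Finally, for part (5) observe that $\mr O^+(1,3)$ preserves the Lorentz products $a^2$, $l^2$, $a\cdot l$ and the sign of $a_0$ on causal vectors, so the formula $(\Lambda,\lambda)\cdot(a,l)=(\lambda\cdot\Lambda a,\Lambda l)$ maps each of $\mathcal M$, $\mathcal M_0$, $\mathcal M_+$ to itself and hence defines an action of $\mr O^+(1,3)\times\bb R_+$ extending the already transitive action of the index-two subgroup $\mr{SO}^+(1,3)\times\bb R_+$. The only step that is not a routine homogeneity fact about the Lorentz group (transitivity on the future null cone, on mass shells, and on the $2$-sphere) is the transitivity claim used in part (3): that $\mr{ISO}(2)$ acts transitively on the cylinder of unit spacelike vectors orthogonal to a fixed future null vector. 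That is where I expect the actual work, so in a full proof I would write out an explicit null rotation fixing $(1,0,0,1)$ and verify that it translates the cylinder coordinate $t$ by any prescribed amount.
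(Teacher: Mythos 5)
Your proposal is correct, and parts (1), (2), (4), (5) follow essentially the paper's route (the paper simply declares (1)--(2) ``trivial consequences'' of Theorem \ref{oldfacts}; your explicit computation of $L^2-\mu^2=1/a_0$ and $1-A^2=a^2/a_0^2$ is exactly what that entails). Where you genuinely diverge is part (3): you normalize $a$ first, using transitivity of ${\mr{SO}}^+(1,3)$ on the future light cone, and then move $l$ around the cylinder $\{(t,\cos\theta,\sin\theta,t)\}$ with the null little group ${\mr{ISO}}(2)$; the paper instead normalizes $l$ first (a boost along $\mb l$ plus a rotation gives $l=(0,1,0,0)$ and $a=a_0(1,0,1,0)$) and then rescales $a$ by a dilation or a boost along $(0,0,1,0)$, which avoids the little-group machinery entirely and is the more elementary argument. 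Your route is fine, but the step you correctly flag as ``the actual work'' does need care: a single null rotation in the $x$-direction shifts the cylinder coordinate $t$ by $\alpha\cos\theta$, which vanishes when $\cos\theta=0$, so you must use the full two-parameter family (shift $\alpha\cos\theta+\beta\sin\theta$, which is surjective onto $\bb R$ since $\cos^2\theta+\sin^2\theta=1$) rather than ``a null rotation shifts $t$ by an arbitrary real.'' One small slip in part (5): the formula does \emph{not} map all of $\mathcal M$ to itself, since for spacelike $a$ (the hyperbolic stratum of $\mathcal M$) the sign of $a_0$ is not Lorentz-invariant --- the paper makes exactly this point in the introduction; the claim is only true for $\mathcal M_0$ and $\mathcal M_+$, which is all the theorem requires, so the conclusion stands.
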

\begin{proof} Parts (1)and (2) are trivial consequences of Theorem \ref{oldfacts}. For the remaining parts, we first observe that the action of ${\mr {SO}}^+(1,3)$ on ${\bb R}^4\times {\bb R}^4$ defined by $\Lambda \cdot (a, l)=  (\Lambda a, \Lambda l)$ preserves $l^2$, $a\cdot l$, $a^2$, and  the sign of $a_0$ when and only when $a^2\ge 0$. Therefore, this action leaves invariant both $\mathcal M_0$ and $\mathcal M_+$.

To prove that the action of ${\mr {SO}}^+(1,3)\times {\bb R}_+$ on $\mathcal M_0$ is transitive, pick any $(a, l)\in \mathcal M_0$. After a Lorentz boost along the direction of $\mb l$ followed by a space rotation, we may assume that $l=(0, 1, 0, 0)$ and $a=a_0(1, 0, 1, 0)$ with $a_0>0$. After a scaling or a Lorentz boost along $(0,0,1,0)$, $a$ can be further turned into $(1, 0, 1, 0)$. Therefore,  the action of ${\mr {SO}}^+(1,3)\times {\bb R}_+$ on $\mathcal M_0$ is transitive.

To prove that the action of ${\mr {SO}}^+(1,3)\times {\bb R}_+$ on $\mathcal M_+$ is transitive, pick any $(a, l)\in \mathcal M_+$. Since $a^2>0$, after a Lorentz boost, we may assume that
$a=(a_0,{\mb 0})$ with $a_0>0$ and $l=(0, {\mb l})$ with $\mb l \cdot \mb l =1$. After a scaling and a space rotation, we can turn $a$ into $(1, {\mb 0})$, $l$ into $(0, 1, 0, 0)$. Therefore,  the action of ${\mr {SO}}^+(1,3)\times {\bb R}_+$ on $\mathcal M_0$ is transitive.

Part (5) is obvious because ${\mr O}(3)$ acts on $\mathcal M$.
\end{proof}
In the parabolic case, the subgroup group ${\mr {SO}}^+(1, 3)$ already acts transitively. In the
elliptic case, modulo a Lorentz transformation, $(a, l)$ can be assumed to be of this form: $a =(a_0, 0, 0, 0)$ and $l =(0, 1, 0, 0)$, i.e., $A = 0$, $L = {1\over \sqrt{a_0}}$ so that the magnetic charge is zero and the orbit is a circle.  In the hyperbolic case, we can still get a transitive action if we add the oriented ``hyperbolic orbit companions" and their certain limits to the set of oriented parabolic orbits. Here, by a hyperbolic orbit companion we mean the hyperbolic branch that has the longest distance from $\mb 0$, i.e., the focus where the center of mass is located. In terms of the 4D language,  parabolic orbits correspond to $a_0 > 0$, their companions correspond to $a_0 < 0$, and the limits correspond to $a_0 = 0$. 

The Jordan algebra approach to the Kepler problem requires a second temporal dimension, and the Lorentz transformations in this note refer to the one that mixes space with this second temporal dimension. While the Minkowski 4-vector $(1, \mb A)$, the Lenz vector in the Jordan algebra approach to the Kepler problem, can be time-like, light-like and space-like, the Minkowski 4-vector $(\mu, \mb L)$ is always space-like, so the magnetic charge $\mu$ is relative and becomes zero in a certain ``inertial" frame.  Is this second temporal dimension more than just a mathematical artifact?  We wish to know the answer to this question.

\end{document}